\theoremstyle{plain}
\newtheorem{theorem}{Theorem}[section]
\newtheorem{corollary}[theorem]{Corollary}
\theoremstyle{definition}
\newtheorem{definition}[theorem]{Definition}
\newtheorem{observation}[theorem]{Observation}
\newtheorem{remark}{Remark}
\algnewcommand{\IIf}[1]{\State\algorithmicif\ #1\ \algorithmicthen}
\algnewcommand{\EndIIf}{\unskip\ \algorithmicend\ \algorithmicif}
\begin{document}
\pagenumbering{arabic}

\title{Optimal and Resilient Pheromone Utilization in Ant Foraging}

\author{Yehuda Afek}
\author{Roman Kecher}
\author{Moshe Sulamy}
\affil{ \small{
\{afek, roman.kecher, moshe.sulamy\}@cs.tau.ac.il} \\
Tel-Aviv University}
\date{\today}
\maketitle

\begin{abstract}

Pheromones are a chemical substance produced and released by ants as 
means of communication.
In this work we present the minimum amount of pheromones necessary and sufficient for a
colony of ants (identical mobile agents) to deterministically find a food source (treasure),
assuming that each ant has the computational capabilities of either a Finite State Machine (FSM) or a
Turing Machine (TM). 
In addition, we provide pheromone-based foraging algorithms capable of handling 
fail-stop faults.

In more detail, we consider the case where $k$ identical ants, initially located at the center (nest) of an
infinite two-dimensional grid and communicate only through pheromones,
perform a collaborative search for an adversarially hidden treasure placed at an unknown distance $D$. 
We begin by proving a tight lower bound of $\Omega(D)$ on the amount of pheromones
required by any number of FSM based ants to complete the search,
and continue to reduce the lower bound to
$\Omega(k)$ for the stronger ants modeled as TM.
We provide algorithms which match the aforementioned lower bounds, and still terminate in
optimal $\mathcal{O}(D + D^2 / k)$ time, under both the synchronous and asynchronous
models.
Furthermore, we consider a more realistic setting, where an unknown number $f < k$ of
ants may fail-stop at any time; we provide fault-tolerant FSM algorithms (synchronous and
asynchronous), that terminate in $\mathcal{O}(D + D^2/(k-f) + Df)$ rounds and emit no more
than the same asymptotic minimum number of $\mathcal{O}(D)$ pheromones overall.
\end{abstract}



\section{Introduction}

While performing mostly simple and local computations, ants solve complicated problems
 in collaboration.
Such collaborative work, with limited communication and computational power, is at the heart of distributed computing.
By studying and understanding the behavior of ants from a distributed computing perspective, progress and insights can be made for both distributed computing and biology.

One task that is successfully solved by ants on a daily basis is that of foraging;
this is the basis for the \emph{Ants Nearby Treasure Search (ANTS)} problem, first presented by Feinerman, Korman, Lotker and Sereni \cite{Feinerman2012Collaborative}.
In the original ANTS problem, $k$ identical ants start at the origin (i.e., the \emph{nest}) of
 an infinite grid, and their goal is to find a treasure (i.e., the \emph{food}) which is 
located at an unknown (Manhattan)
 distance
$D \in \mathcal{N}$ from the nest. The ants are modeled as randomized mobile Turing Machines (henceforth TM), traveling along $\mathcal{Z}^2$, and it is assumed that they cannot communicate after
leaving the nest.
Once any ant steps onto the grid point containing the food, the food is found and the algorithm terminates.
The lower bound on the time required to find the food, $\Omega(D+D^2/k)$, is also 
noted.

However, it is known that real ants have limited means of communication; for instance, ants are able to
 produce, emit and sense pheromones \cite{Jackson2006Communication}.
This work considers a variant of the aforementioned ANTS problem, in which ants are
able to communicate only through pheromones \cite{Lenzen2013Pheromones}. We present a tight lower
bound of $\Omega(D)$ on the number of pheromones necessary by $k$ deterministic mobile Finite State
Machines (henceforth FSM) to guarantee a successful search, and show that this amount
of pheromones is in fact enough to complete the search in
optimal $\mathcal{O}(D + D^2 / k)$ time. Moreover, fault-tolerant synchronous and
asynchronous FSM algorithms are provided for the case where an unknown number
$f < k$ of the ants may fail-stop at any time; these algorithms terminate in 
$\mathcal{O}(D + D^2 / (k-f) + D f)$ time, whilst only consuming the same number of 
$\mathcal{O}(D)$ pheromones. In an attempt to reduce the number of required
 pheromones, stronger ants modeled as TM are considered. While TM ants do not 
have to use pheromones to conduct a thorough search, it is shown that $\Omega(k)$
 pheromones are required in order to achieve an optimal time search.

\paragraph{Outline}
The rest of this document is organized as follows.
The current section proceeds with a summary of related and previous work on the
 ANTS problem, and concludes with a concise  summary of the contributions of this research.
Section \ref{section:model} covers the models in use:
it describes the general model of ANTS with pheromones, the complexity measures 
employed throughout this work, and the synchronous and asynchronous scheduling models.
Section \ref{section:fsm} presents tight lower bounds on the amount of pheromones
required by FSM ants (synchronous and asynchronous), and section \ref{section:ft} 
provides fault-tolerant FSM algorithms.
Section \ref{section:tm} reduces the lower bound for TM ants, while still maintaining an
optimal run-time. Section \ref{section:summary} summarizes this work and its
results.

\section{Related work}

Feinerman, Korman, Lotker and Sereni have introduced the original ANTS problem in
\cite{Feinerman2012Collaborative}. They have shown that if each ant is a randomized TM
which cannot communicate after leaving the nest, then
matching the optimal runtime of $\mathcal{O}(D + D^2 / k)$ requires that ants have
 knowledge or a constant approximation of $k$, their total number.
In a later work, Feinerman and Korman \cite{Feinerman2012Memory} have also
analyzed the trade-offs between the exact memory size available to each individual agent (ant)
 and the best achievable run-time.

Another line of work, presented in \cite{Emek2013FSM, Emek2014FSM, Emek2014HowMany, Langner2014FaultTolerant}, 
assumes that when ants are sufficiently close to each other, they are able to communicate through 
messages of constant size; in \cite{Emek2013FSM, Emek2014FSM} Emek et al. model
ants as mobile (probabilistic) FSM with said communication capabilities, and
 show that the
optimal run-time can be achieved in this model as well. The successful termination of these
distributed algorithms relies on the communication patterns of the ants; therefore, Emek
et al. also discuss the minimum number of ants that are required \cite{Emek2014HowMany}.
Finally, a fault-tolerant FSM algorithm is presented by Langner et al. in
 \cite{Langner2014FaultTolerant}, but only for the synchronous case. This algorithm
assumes that an unknown number $f \le n/13$ of the ants may fail-stop, and it is shown that it
terminates in time $\mathcal{O}(D + D^2 / k + D f)$ with high probability.

A more realistic model of communication is based on the capability of ants 
 to use chemical signals called pheromones, that can be emitted by one ant and sensed
by another \cite{Jackson2006Communication}. 
A corresponding model of the ANTS problem, which supports communication through pheromones,
has been introduced by Lenzen and Radeva \cite{Lenzen2013Pheromones}. In that model,
each ant is a deterministic mobile FSM, that is able to
emit and sense pheromones. It is assumed that ants emit a pheromone on every explored
spot, and so an algorithm that uses $\mathcal{O}(D^2)$ pheromones and 
terminates in optimal $\mathcal{O}(D+D^2/k)$ time is presented.

Pheromones are a biological resource which might be limited in quantity or costly to use
in nature. Therefore, in our model \cite{AKS2014Pheromones}
ants are able to \emph{choose} whether or not to emit a pheromone upon exploring a grid
 point. Thus, analyzing the minimum amount of pheromones that is required in order to
 find the food, 
under various computational models, is a natural research direction.

The use of pheromones or similar chemical signals during the foraging process of real 
biologic ants is still not entirely clear, and constitutes an active research area in biology.
Some reports suggest that ants
use pheromones only to mark the trail to a food item that has been already found, while
other observations support the claim that pheromones do, in fact, control the foraging
 direction \cite{Greene2007Patrollers, Gordon2010Ants}. In this work we consider this
question from a theoretic distributed computing perspective, but still hope that 
it may help with understanding real ants.

\section{Contributions}

The contributions of this work are as follows:

\begin{description}
\item[Model] The model of Lenzen and Radeva \cite{Lenzen2013Pheromones} 
is extended in a natural direction, by allowing ants to choose whether to emit
pheromones or not. Communication through pheromones is modeled carefully
for both synchronous and asynchronous settings.
\item[Lower bounds] It is shown that FSM ants must use $\Omega(D)$ pheromones (where $D$ is
the distance to the food) to complete the search, and TM ants must use $\Omega(k)$ pheromones (where $k$ is the total number of ants)
 to find the treasure in an optimal $\mathcal{O}(D + D^2/k)$ amount of time.
\item[Upper bounds] Four different algorithms are provided, for each combination of FSM/TM ants
and synchronous/asynchronous settings. These algorithms terminate in optimal time and
require $\mathcal{O}(D)$ pheromones for FSM and $\mathcal{O}(k)$  for TM ants, proving the
lower bounds to be tight.
\item[Fault-tolerance] Another set of two (synchronous and asynchronous FSM) algorithms is
 provided for the case where $f < k$ of the ants may
 fail-stop at any time. These algorithms terminate in $\mathcal{O}(D + D^2/(k-f) + Df)$ time, whilst
 only requiring the same number of $\mathcal{O}(D)$ pheromones.
\end{description}

One interesting quality of the algorithms presented in this work, is that they do
not differ that much; we have made a big effort to keep the algorithms as similar as possible,
 and have constructed one on top of the other in an incremental manner wherever possible.

\section{Model}
\label{section:model}

We consider the ANTS problem in which $k \in \mathcal{N}$ identical mobile agents (ants) are traveling
along the discrete two-dimensional grid $\mathcal{Z}^2$, foraging for food. 
All ants start at $(0, 0) \in \mathcal{Z}^2$
the center of the grid, a point which is also referred to as the nest. A single food source (treasure) is
 located at some distance $D \in \mathcal{N}$ from the nest, such that the position of the
 treasure and
the distance to it are unknown. Manhattan distance is used, such that a treasure at 
$(i, j) \in \mathcal{Z}^2$ is said to be at distance $D = |i| + |j|$. 
Each ant executes the same algorithm until the treasure is found, at which point
 the distributed search process terminates successfully.
We only consider \emph{uniform algorithms},
i.e. every single algorithm should work for all $k, D \in \mathcal{N}$.

The only way in which ants may communicate is through pheromones; 
under the proposed model even two ants sharing
the same exact location are not aware of this fact.
Every ant is able to emit
a pheromone in its current location, as well as sense a pheromone previously emitted at
that spot by others (including itself). An ant can choose whether to emit a pheromone, or not.
There is only a single type of pheromone, and once a pheromone is
emitted it will stay at that grid point forever. We assume that the quantity of
the pheromones is unlimited, but discuss the required amount throughout this paper.

We start by considering deterministic FSM ants with a constant size memory, and proceed
to consider stronger deterministic TM ants with unbounded amount of memory.
However, even in the case of TM ants, we actually only require $\mathcal{O}(\log D)$ space\footnote{Assuming $D > k$, which will be a prominent assumption throughout this work.}.

\paragraph{Steps} Execution advances with ants performing \emph{step}s. A single, atomic step of an ant
consists of the following sequence: 
(1) sense if a pheromone exists in the current location,
(2) use all local information (current state, memory contents and pheromone sensing) 
to decide
on the current actions (pheromone emission and move direction\footnote{Four move directions are labeled as North, East, South and West. Holding current position is also possible.}) and state transition, and
(3) execute the optional pheromone emission and then the move,
and update state. 
The arrival to the next grid point and
the repeated execution from (1) are considered to be part of the next step.

Both synchronous and asynchronous scheduling models are considered. In the asynchronous
case, an adversary scheduler repeatedly selects a single ant to perform one step. It follows
 that in the asynchronous case, sensing and emitting a pheromone on the current location of
an ant is atomic, and is carried out entirely in a single step.

In the synchronous scheduling model, all the
active ants perform a single step together in each round of the algorithm. Therefore, 
 two ants standing on the same spot will both notice the
existence, or lack thereof, of a pheromone at the beginning of their steps, and might
make the same decisions from that point onwards. They will not have any means
to break the symmetry in this case. Due to this issue, in the synchronous case we will assume a gradual
 release of the ants from the nest at $(0,0)$, such that no two ants leave the nest at the same 
time\footnote{Mechanisms that achieve this are referred to as \emph{emission schemes}\cite{Emek2013FSM, Emek2014FSM}, and are beyond the scope of this work.}.
Our analysis of the expected run time (which is measured in \emph{rounds}, defined below)
will assume this "nest emission" time to be negligible compared to the total search time.

\paragraph{Complexity} The running-time of the proposed algorithms is measured in \emph{rounds}.
The definition is the same for both scheduling models: a round ends as soon as every ant
took at least a single step. Note that this definition is not equivalent to defining the number
of rounds to be the number of steps taken by the slowest ant \cite{Lenzen2013Pheromones} (the ant who has made the
least number of steps), because this alternative definition is absolutely indifferent to the
 scheduling order in the asynchronous case.

Besides time, the other quantity this work is concerned with, is the total amount of 
pheromones that are emitted during the execution of an algorithm.

\paragraph{Faults} The concept of faults manifests itself in reality, when a number of ants might stop
participating in a foraging process \cite{Langner2014FaultTolerant}, e.g., due to an ant eater performing its natural task.
To model this possibility, we assume that an unknown number
$f < k$ of ants may fail-stop at any given moment during the execution of an algorithm.
When devising fault-tolerant algorithms we require that the treasure is still guaranteed to
be found,
in spite of any such failures.

\section{The case of FSM Ants}
\label{section:fsm}

This section is concerned with ants modeled as mobile deterministic FSM, that are able to emit and
sense pheromones. We begin by proving a lower bound on the number of pheromones 
that must be utilized in order to conduct a successful search for a food source, 
and follow with algorithms (both synchronous and asynchronous) that achieve this bound, 
but still terminate in optimal time.

\subsection{Lower Bound}

\begin{theorem}\label{thm:fsm}
A deterministic FSM ants algorithm must emit at least $\Omega(D)$ pheromones to find a treasure located at arbitrary distance $D \in \mathcal{N}$.
\end{theorem}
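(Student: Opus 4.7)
The plan is to exploit that a deterministic FSM ant in the absence of pheromone input behaves in an eventually periodic fashion, so the set of grid points it can sweep between successive pheromone encounters is highly restricted. First, I will characterize pheromone-free motion: since the FSM has $|Q|=\mathcal{O}(1)$ states and receives the constant input ``no pheromone,'' its state sequence becomes periodic after an $\mathcal{O}(1)$-length transient. The net displacement over one period is a fixed vector $(a,b)$ of constant magnitude. If $(a,b)=(0,0)$, the ant is eventually confined to a bounded region of $\mathcal{O}(1)$ grid cells; otherwise, the ant drifts in the fixed direction $(a,b)$ along a ``strip'' of constant width.

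Next, I partition each ant's trajectory (up to termination) into maximal pheromone-free sub-trajectories, which I call \emph{strips}. Each strip is uniquely determined by its starting (state, position) pair; starting positions are the origin (for the ant's very first strip) or one of the $m$ pheromone-marked locations (for any strip that immediately follows a pheromone encounter). Since there are $\mathcal{O}(1)$ possible states and $m+1$ possible starting positions, the total number of distinct strips that arise across all ants in the execution is $\mathcal{O}(m)$.

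Then I bound each strip's contribution to the distance-$D$ boundary $B_D := \{(x,y) : |x|+|y|=D\}$. A strip of constant width whose direction is not parallel to any side of $B_D$ intersects $B_D$ in only $\mathcal{O}(1)$ grid points. A strip parallel to one of the four sides could, in principle, sweep along that side and cover up to $\Theta(D)$ points, but this would require the strip's starting position to lie on (or within $\mathcal{O}(1)$ of) the side itself---i.e., at Manhattan distance $\geq D-\mathcal{O}(1)$ from the origin. Thus every such ``side-hugging'' strip is initiated at a pheromone placed near the distance-$D$ boundary. The crucial observation is that placing even one pheromone at distance $\Omega(D)$ already costs $\Omega(D)$ emissions: inside any pheromone-free strip, the ant's emission schedule is determined by its periodic state cycle, so it either emits at a constant positive rate throughout the strip (yielding $\Omega(L)$ emissions for a strip of length $L$), or it emits only during its $\mathcal{O}(1)$-length transient, in which case it can place pheromones only within $\mathcal{O}(1)$ of the strip's starting position. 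In the latter case the pheromone frontier advances by only $\mathcal{O}(1)$ per pheromone, so reaching distance $D$ costs $\Omega(D)$ pheromones regardless.

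The conclusion follows by an adversary argument: the treasure may be placed at the point of $B_D$ that is visited latest, so correctness forces the algorithm to visit every one of the $\Theta(D)$ points of $B_D$. The $\mathcal{O}(m)$ strips contribute $\mathcal{O}(m)$ boundary points unless side-hugging strips are employed, and any side-hugging strip requires $\Omega(D)$ pheromones to set up by the previous step. Either way, $m=\Omega(D)$. The main difficulty I anticipate is formalizing the aligned-strip case; the cleanest way seems to be a ``pheromone-frontier'' induction combined with the transient-versus-cycle dichotomy for the FSM's emission pattern inside pheromone-free regions.
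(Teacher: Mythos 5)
Your proposal is correct in outline, and it takes a genuinely different route from the paper's proof. Both arguments rest on the same engine---a deterministic FSM fed the constant input ``no pheromone'' becomes periodic within $S$ steps ($S$ the number of states), so its pheromone-free motion degenerates into a rigid repeated displacement---but they deploy it in opposite directions. The paper \emph{localizes}: if only $o(D)$ pheromones are ever emitted, some $S+1$ consecutive layers are pheromone-free; the first ant to cross this annulus must close a state loop inside it, and with nothing ahead to break the loop it repeats the same displacement forever and cannot cover the layers beyond (the extension from one ant to $k$ is relegated to an informal observation). You \emph{globalize}: every visited cell is charged either to one of the $m$ marked cells or to one of the $\mathcal{O}(m)$ deterministic strips (each fixed by an $\mathcal{O}(1)$-state, $\mathcal{O}(m)$-position starting pair), and each strip meets the layer $B_D$ in $\mathcal{O}(1)$ cells unless it is parallel to a side of the diamond, a case you eliminate with the pheromone-frontier induction. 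Your route is longer---the side-parallel case and the frontier induction are work the paper never needs---but it buys a uniform treatment of all $k$ ants in a single counting argument, and it avoids having to argue that a looping ant loops \emph{forever}, a claim that in the multi-ant setting requires checking that pheromones emitted later by other ants cannot land on the looper's future path. Two small points to fix when writing it up: a maximal pheromone-free sub-trajectory begins at a cell \emph{adjacent} to a marked cell (the ant senses, then moves), so the strip count is at most $4Sm+S$ rather than $Sm$ (harmless); and the invariant to prove in the frontier induction is that whenever $m'$ cells are marked, every marked cell lies within distance $\mathcal{O}(m')$ of the origin, using the dichotomy that inside a strip the ant either emits once per period at distinct cells or only during its $\mathcal{O}(1)$ transient.
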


\begin{proof}
Let us assume by contradiction that an FSM $\mathcal{M}$ exists which is able to find a
 treasure located at any distance $D \in \mathcal{N}$, while only emitting $o(D)$ pheromones.
Let $S$ be the number of different states in $\mathcal{M}$.

\begin{definition}[Layer]
A layer $l \in \mathcal{N}$ is the set of grid cells $(i, j) \in \mathcal{Z}^2$ at distance $l$ from
the origin, i.e., $|i| + |j| = l$.
\end{definition}

A distance $D \in \mathcal{N}$ exists, such that during $\mathcal{M}$'s search for a 
treasure located at that distance, there are $S+1$ consecutive pheromone-free layers.
 Otherwise, at least
$D/(S+1) = \Theta(D)$ pheromones are emitted for every $D \in \mathcal{N}$, in contradiction to the assumption. Let us denote these pheromone-free layers as layers $l_0, l_1, \dots, l_S$ 
(where $l_S = l_0 + S$).

Consider the first time an ant arrives at layer $l_S$, and let $s_0$ be the state in which it arrives at layer $l_S$.
If we look
at the last $S+1$ locations (the last one being at $l_S$) this ant has visited, all of them are within layers $l_0$ to
$l_S$. By the pigeon-hole principle there must be a loop, and the last state ($s_0$) is part of it.
Therefore, state $s_0$ must be reached at least once more in the last $S$ steps. Assume
w.l.o.g., that happens in layer $l_i$. So we have a path starting at a location in layer $l_i$ with state
 $s_0$ and ending
 at a location in $l_S$ with the same state. During this path the FSM does not encounter nor emit any
pheromones, and therefore this path will now repeat itself forever.

The FSM repeating this path ad infinitum will not find a treasure located outside
of the repeated pattern formed by this path, in contradiction to the original assumption.
\end{proof}

\begin{observation}
Theorem \ref{thm:fsm} holds for a single ant. Applying the same argument to a group of $k$
ants implies that unless $\Omega(D)$ pheromones are emitted, each of the $k$ ants will
eventually converge to repeating some (possibly distinct) pattern. 
Any $k$ such patterns cannot cover the whole plain.
\end{observation}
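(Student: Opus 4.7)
The plan is to adapt the single-ant argument of Theorem~\ref{thm:fsm} by first fixing the global pheromone configuration, then applying the per-ant pigeonhole argument, and finally showing that $k$ eventually-bounded orbits cannot cover the grid.

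First, assume for contradiction that some algorithm finds the treasure for every $D \in \mathcal{N}$ while emitting $o(D)$ pheromones in total across all $k$ ants. Since the total emission in any fixed execution is bounded, there is a finite time $T^*$ after which no ant ever emits another pheromone. Moreover, the stabilized pheromone-bearing cells occupy only $o(D)$ distinct layers, so for every sufficiently large $D$ there exist $S+1$ consecutive pheromone-free layers $l_0, l_1, \ldots, l_S$ (with $l_S = l_0 + S$) that remain empty for the rest of the execution.

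Next, I would apply the Theorem~\ref{thm:fsm} argument to each ant separately. For the $j$-th ant, consider the first step after $T^*$ at which it reaches layer $l_S$, if such a step exists. The same pigeonhole argument produces a state repetition at two positions inside the pheromone-free annulus; since the pheromone field is now static, this segment of moves is self-repeating, so the ant is locked into a perpetual loop confined to a bounded annular region. Ants that never reach layer $l_S$ after $T^*$ are already trapped in the bounded set of cells at distance $< l_S$. Hence each of the $k$ ants visits only finitely many distinct cells from some point on.

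The conclusion is a straightforward covering argument: the union of the $k$ eventually-bounded orbits is itself a bounded subset of $\mathcal{Z}^2$, so placing the treasure at any cell outside this union (which exists since the union misses all of $\mathcal{Z}^2$ beyond some finite radius) contradicts the assumption that the algorithm terminates successfully. The main obstacle I expect is the interaction between ants: in principle, one ant's loop could be broken by a different ant depositing a pheromone along its path later in the execution. The fix is exactly to fix the global stabilization time $T^*$ \emph{before} invoking pigeonhole on any individual ant, so that the chosen pheromone-free annulus is guaranteed to stay pheromone-free forever and the per-ant loop truly repeats.
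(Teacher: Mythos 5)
The paper states this observation without a proof, so yours is a reconstruction of the intended ``same argument applied to each ant''; it is the right general route, but two of its load-bearing steps are genuinely wrong. First, the stabilization time $T^*$ need not exist. The hypothesis is that $o(D)$ pheromones are emitted by the time a treasure at distance $D$ is found; this is a bound that grows with $D$, not a uniform constant, so over the infinite (treasure-free) execution the total number of emissions can be unbounded (e.g.\ $\Theta(\sqrt{D})$ emissions by the time distance $D$ is explored is consistent with $o(D)$), and it is even possible that every layer eventually receives a pheromone. Hence there is in general no time after which the pheromone field is static, and no annulus that ``remains empty for the rest of the execution.'' The argument has to be run relative to a fixed large $D$, as in Theorem~\ref{thm:fsm}: during the finite execution that finds a treasure at distance $D$, at most $o(D)$ of the layers $1,\dots,D$ are ever touched, which yields $S+1$ consecutive layers that are clean \emph{throughout that execution}.

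Second, the per-ant loop does not confine the ant to ``a bounded annular region,'' and the final covering step (``the union of the $k$ eventually-bounded orbits is bounded'') therefore rests on a false premise. Since an ant's Manhattan distance changes by exactly one per step, its last $S+1$ positions before its first arrival at layer $l_S$ lie one per layer $l_0,\dots,l_S$; the repeated state thus occurs at two positions in \emph{distinct} layers, the loop has a nonzero, strictly outward displacement, and the ant escapes to infinity along an unbounded periodic path. The correct conclusion is thinness, not boundedness: each escaping ant visits $\mathcal{O}(1)$ cells per layer, so the $k$ patterns cover only $\mathcal{O}(k)$ of the $4D$ cells of layer $D$, and for $D \gg k$ some cell at distance $D$ is missed --- this is what ``$k$ such patterns cannot cover the whole plane'' means. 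Your worry about one ant's loop being broken by another ant's later emission is legitimate, but the fix is not a global freeze of the pheromone field; it is the observation that the escape paths are strictly outward-moving and emission-free, so by induction no pheromone is ever placed at or beyond the clean annulus and no loop is ever broken.
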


\begin{corollary}
A lower bound on the number of pheromones that must be emitted by $k$ deterministic
FSM ants, searching for a treasure located at an unknown distance $D \in \mathcal{N}$, 
is $\Omega(D)$ pheromones.
\end{corollary}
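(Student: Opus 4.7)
The plan is to mimic the single-ant argument of Theorem~\ref{thm:fsm}, but apply the pigeon-hole principle at the scale of $k$ ants with a combined state space of size $S^k$ (where $S$ is the number of states of a single FSM). First I would assume for contradiction that there is an algorithm $\mathcal{M}$ and infinitely many values of $D$ for which only $o(D)$ pheromones are ever emitted during the search for a treasure at distance $D$. By choosing $D$ large enough and invoking pigeon-hole on the $\Theta(D)$ layers and the $o(D)$ emissions, one obtains $kS^k + 1$ consecutive pheromone-free layers $l_0, l_1, \dots, l_{kS^k}$.

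Next I would argue that from some time onwards no pheromone is ever emitted. Since the total number of pheromones in the execution is finite, a last emission time $T$ exists; after $T$ the set of pheromones on the grid is frozen. From time $T$ onwards the joint behaviour of the $k$ ants, once their positions enter the pheromone-free band and while they remain within it, is a deterministic function of the joint state of the $k$ ants alone (each ant sees only pheromone-free cells, so its step depends only on its own state). The combined state space has size $S^k$, so any sufficiently long trajectory inside the band must contain a repetition of the joint state, and from that repetition onwards the joint trajectory is eventually periodic.

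I would then show that at least one ant eventually becomes confined to a bounded region forever. The key step is: if every ant kept escaping the pheromone-free band, it would have to cross one of the bounding layers $l_0$ or $l_{kS^k}$ infinitely often; but the joint trajectory is periodic after $T$, so each ant's individual trajectory is eventually periodic as well, and an eventually periodic walk on $\mathcal{Z}^2$ either stays bounded or drifts along a fixed direction with bounded amplitude — in particular, each ant's visited set is contained in an infinite strip of bounded width. The union of $k$ such strips, together with the bounded set of cells that carry pheromones, cannot cover $\mathcal{Z}^2$. Hence an adversary may place the treasure at a cell the algorithm never visits, contradicting correctness.

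The main obstacle I expect is the last geometric step: making precise the claim that an eventually periodic walk in $\mathcal{Z}^2$ has a visited set contained in a bounded-width strip, and that $k$ such strips cannot cover the plane for arbitrarily large $D$. The cleanest route is probably to observe that over one period the net displacement of each ant is a fixed vector $v_i \in \mathcal{Z}^2$, so its trajectory lies within bounded distance of the ray $\{t v_i : t \ge 0\}$; then choose a treasure location at distance $D$ from the origin that avoids the finite set of pheromone-marked cells and all $k$ such rays, which is always possible since the complement has infinitely many points on every sufficiently distant layer.
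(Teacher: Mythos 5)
Your overall plan (a pheromone-free band, pigeonhole on states, eventual periodicity, and the observation that $k$ narrow orbits cannot cover the plane) is in the spirit of the paper's argument, but two of your intermediate steps do not hold as stated. First, ``since the total number of pheromones in the execution is finite, a last emission time $T$ exists'' is not justified by the hypothesis: the assumption is only that $o(D)$ pheromones are emitted \emph{by the time a treasure at distance $D$ is found}, for each $D$. Over the infinite (treasure-free) execution the total can be unbounded --- e.g., the $n$-th pheromone is emitted only after all cells at distance $2^{n}$ have been visited, which costs $O(\log D)=o(D)$ pheromones per search --- so no last emission time need exist. Second, the pigeonhole on the joint state space of size $S^k$ does not yield eventual periodicity of the joint trajectory: even after all emissions stop, an ant's transition depends on whether its \emph{current cell} carries one of the already-emitted pheromones, so the configuration that determines the future is (joint state, positions relative to the pheromone map), which is infinite. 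Your qualifier ``while they remain within the band'' does not repair this, because the ants do not remain in the band --- the whole point is that they pass through it --- and at any moment some ants may be sitting on marked cells near the nest. A related hole: even granting a recurrence of the joint state with every ant on a clean cell, the repetition persists only as long as each ant's translated orbit keeps avoiding marked cells; if some ant's per-period displacement points back toward the nest, it re-enters marked territory and the periodicity breaks. This requires an argument (essentially that $n\mapsto\|p+nv\|_1$ is convex and increasing past the band's outer layer), which you do not supply; you instead flag the strip-coverage count as the main obstacle, which is in fact the easy part.

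The paper sidesteps all three issues by arguing \emph{per ant}: it takes a band of only $S+1$ consecutive pheromone-free layers and applies the loop argument of Theorem~\ref{thm:fsm} to each ant separately at the first moment that ant reaches the outer layer of the band; at that moment the ant has never been further out, and (by induction on the order of crossings, since looping ants emit nothing) everything from the band outward is still pheromone-free. Each crossing ant is then locked into a drifting loop confined to a strip of width $O(S)$, non-crossing ants stay inside a bounded ball, and $k$ such strips meet a sufficiently distant layer in only $O(kS)$ of its $4D$ cells, leaving an unvisited cell. I would restructure your proof along these per-ant lines, confining every claim to the finite prefix of the execution that ends once all cells at distance at most $D$ have been visited (at which point only $o(D)$ pheromones exist anywhere), rather than reasoning about the execution ``from some time onwards.''
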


\subsection{An asynchronous FSM Ant algorithm}

\begin{definition}[Ray]
A ray of pheromones is a set of consecutive grid cells that extend from the nest in a specific
direction (north, east, south or west), such that each cell on the ray contains a pheromone. 
E.g., the northern ray contains the grid cells $(0, i) \in \mathcal{Z}^2$ such that
$i = 1, 2, ..., m$ where $(0, m)$ is the last grid cell that still contains a pheromone.
Note that the nest itself, $(0,0)$, is not considered to be part of any ray, and remains 
pheromone-free throughout the algorithm.

Similar to \cite{Emek2013FSM}, such pheromone rays are utilized as markers that enable FSM ants to perform a spiral
 shaped search, without having to count. 
\end{definition}

The asynchronous algorithm is therefore straightforward; an ant first travels along the eastern ray to
extend it by an extra pheromone, and then returns back to the nest (which is at the first
pheromone-free grid cell at the western end of the ray). 
It then repeats this process for the southern and western rays.
 Once the eastern, southern and western rays are extended, the ant travels along
the northern ray until the first pheromone-free grid cell. It emits a pheromone at that grid
cell and then starts executing zig-zag moves to the south-east (pairs of a step to the south
followed by a step to the east) until the eastern ray is encountered, at which point the
zig-zag's direction is changed to south-west, until the southern ray is encountered.
The process continues in a similar manner until the layer is fully explored and the
 ant is back at the northern ray, from
which it is able to travel back to the nest. After arriving back at the nest, the ant restarts the
process to explore another new layer. This process is repeated until the treasure is
 eventually found. See algorithm \ref{alg:async_fsm} for a more formal description, and
figure \ref{fig:async_fsm} for an illustration of the process.

\begin{figure}[!ht]
\centering
\begin{tikzpicture}[scale=0.70]
  \node[align=center] at (-4,3) {\scriptsize 3};
  \node[align=center] at (-4,2) {\scriptsize 2};
  \node[align=center] at (-4,1) {\scriptsize 1};
  \node[align=center] at (-4,0) {\scriptsize 0};
  \node[align=center] at (-4,-1) {\scriptsize -1};
  \node[align=center] at (-4,-2) {\scriptsize -2};
  \node[align=center] at (-4,-3) {\scriptsize -3};
  \node[align=center] at (-3,-4) {\scriptsize -3};
  \node[align=center] at (-2,-4) {\scriptsize -2};
  \node[align=center] at (-1,-4) {\scriptsize -1};
  \node[align=center] at (0,-4) {\scriptsize 0};
  \node[align=center] at (1,-4) {\scriptsize 1};
  \node[align=center] at (2,-4) {\scriptsize 2};
  \node[align=center] at (3,-4) {\scriptsize 3};
  \draw[thick,->] (-4,3.75) -- (-4,4.5) node[anchor=north][above] {North};
  \draw[thick,->] (3.75,-4) -- (4.5,-4) node[anchor=east][right] {East};
  \draw[thick,->] (-4,-3.75) -- (-4,-4.5) node[anchor=south][below] {South};
  \draw[thick,->] (-3.75,-4) -- (-4.5,-4) node[anchor=west][left] {West};

  \draw[dashed] (-4,3.5) -- (4,3.5);
  \draw[dashed] (-4,2.5) -- (4,2.5);
  \draw[dashed] (-4,1.5) -- (4,1.5);
  \draw[dashed] (-4,0.5) -- (4,0.5);
  \draw[dashed] (-4,-0.5) -- (4,-0.5);
  \draw[dashed] (-4,-1.5) -- (4,-1.5);
  \draw[dashed] (-4,-2.5) -- (4,-2.5);
  \draw[dashed] (-4,-3.5) -- (4,-3.5);

  \draw[dashed] (-3.5,4) -- (-3.5,-4);
  \draw[dashed] (-2.5,4) -- (-2.5,-4);
  \draw[dashed] (-1.5,4) -- (-1.5,-4);
  \draw[dashed] (-0.5,4) -- (-0.5,-4);
  \draw[dashed] (0.5,4) -- (0.5,-4);
  \draw[dashed] (1.5,4) -- (1.5,-4);
  \draw[dashed] (2.5,4) -- (2.5,-4);
  \draw[dashed] (3.5,4) -- (3.5,-4);

  \fill[gray] (0,1) circle (4pt);
  \fill[gray] (0,2) circle (4pt);
  \fill[gray] (1,0) circle (4pt);
  \fill[gray] (2,0) circle (4pt);
  \fill[gray] (0,-1) circle (4pt);
  \fill[gray] (0,-2) circle (4pt);
  \fill[gray] (-1,0) circle (4pt);
  \fill[gray] (-2,0) circle (4pt);

  \draw [thick, <->, shorten <= 1, shorten >= 1] (0,0) to node[above] {2} (2,0);
  \draw [thick, <->, shorten <= 1, shorten >= 1] (0,0) to node[right] {3} (0,-2);
  \draw [thick, <->, shorten <= 1, shorten >= 1] (0,0) to node[below] {4} (-2,0);
  \draw [thick, ->, shorten <= 1, shorten >= 1] (0,0) to node[left] {5} (0,2);

\end{tikzpicture}
\begin{tikzpicture}[scale=0.70]
  \node[align=center] at (-4,3) {\scriptsize 3};
  \node[align=center] at (-4,2) {\scriptsize 2};
  \node[align=center] at (-4,1) {\scriptsize 1};
  \node[align=center] at (-4,0) {\scriptsize 0};
  \node[align=center] at (-4,-1) {\scriptsize -1};
  \node[align=center] at (-4,-2) {\scriptsize -2};
  \node[align=center] at (-4,-3) {\scriptsize -3};
  \node[align=center] at (-3,-4) {\scriptsize -3};
  \node[align=center] at (-2,-4) {\scriptsize -2};
  \node[align=center] at (-1,-4) {\scriptsize -1};
  \node[align=center] at (0,-4) {\scriptsize 0};
  \node[align=center] at (1,-4) {\scriptsize 1};
  \node[align=center] at (2,-4) {\scriptsize 2};
  \node[align=center] at (3,-4) {\scriptsize 3};
  \draw[thick,->] (-4,3.75) -- (-4,4.5) node[anchor=north][above] {North};
  \draw[thick,->] (3.75,-4) -- (4.5,-4) node[anchor=east][right] {East};
  \draw[thick,->] (-4,-3.75) -- (-4,-4.5) node[anchor=south][below] {South};
  \draw[thick,->] (-3.75,-4) -- (-4.5,-4) node[anchor=west][left] {West};

  \draw[dashed] (-4,3.5) -- (4,3.5);
  \draw[dashed] (-4,2.5) -- (4,2.5);
  \draw[dashed] (-4,1.5) -- (4,1.5);
  \draw[dashed] (-4,0.5) -- (4,0.5);
  \draw[dashed] (-4,-0.5) -- (4,-0.5);
  \draw[dashed] (-4,-1.5) -- (4,-1.5);
  \draw[dashed] (-4,-2.5) -- (4,-2.5);
  \draw[dashed] (-4,-3.5) -- (4,-3.5);

  \draw[dashed] (-3.5,4) -- (-3.5,-4);
  \draw[dashed] (-2.5,4) -- (-2.5,-4);
  \draw[dashed] (-1.5,4) -- (-1.5,-4);
  \draw[dashed] (-0.5,4) -- (-0.5,-4);
  \draw[dashed] (0.5,4) -- (0.5,-4);
  \draw[dashed] (1.5,4) -- (1.5,-4);
  \draw[dashed] (2.5,4) -- (2.5,-4);
  \draw[dashed] (3.5,4) -- (3.5,-4);

  \fill[gray] (0,1) circle (4pt);
  \fill[gray] (0,2) circle (4pt);
  \fill[gray] (1,0) circle (4pt);
  \fill[gray] (2,0) circle (4pt);
  \fill[gray] (0,-1) circle (4pt);
  \fill[gray] (0,-2) circle (4pt);
  \fill[gray] (-1,0) circle (4pt);
  \fill[gray] (-2,0) circle (4pt);

  \draw [thick, -, shorten <= 1] (0,2) to (1,2);
  \draw [thick, ->, shorten >= 1] (1,2) to node[right] {6} (1,1);
  \draw [thick, -, shorten <= 1] (1,1) to (2,1);
  \draw [thick, ->, shorten >= 1] (2,1) to node[right] {6} (2,0);
  \draw [thick, -, shorten <= 1] (2,0) to (2,-1);
  \draw [thick, ->, shorten >= 1] (2,-1) to node[below] {7} (1,-1);
  \draw [thick, -, shorten <= 1] (1,-1) to (1,-2);
  \draw [thick, ->, shorten >= 1] (1,-2) to node[below] {7} (0,-2);
  \draw [thick, -, shorten <= 1] (0,-2) to (-1,-2);
  \draw [thick, ->, shorten >= 1] (-1,-2) to node[left] {8} (-1,-1);
  \draw [thick, -, shorten <= 1] (-1,-1) to (-2,-1);
  \draw [thick, ->, shorten >= 1] (-2,-1) to node[left] {8} (-2,-0);
  \draw [thick, -, shorten <= 1] (-2,0) to (-2,1);
  \draw [thick, ->, shorten >= 1] (-2,1) to node[above] {9} (-1,1);
  \draw [thick, -, shorten <= 1] (-1,1) to (-1,2);
  \draw [thick, ->, shorten >= 1] (-1,2) to node[above] {9} (0,2);
  \draw [thick, ->, shorten <= 1, shorten >= 1] (0,2) to node[left] {10} (0,0);

\end{tikzpicture}
\caption{Visualization of algorithm \ref{alg:async_fsm}: ray extension and exploration. Circles mark pheromones.}
\label{fig:async_fsm}
\end{figure}
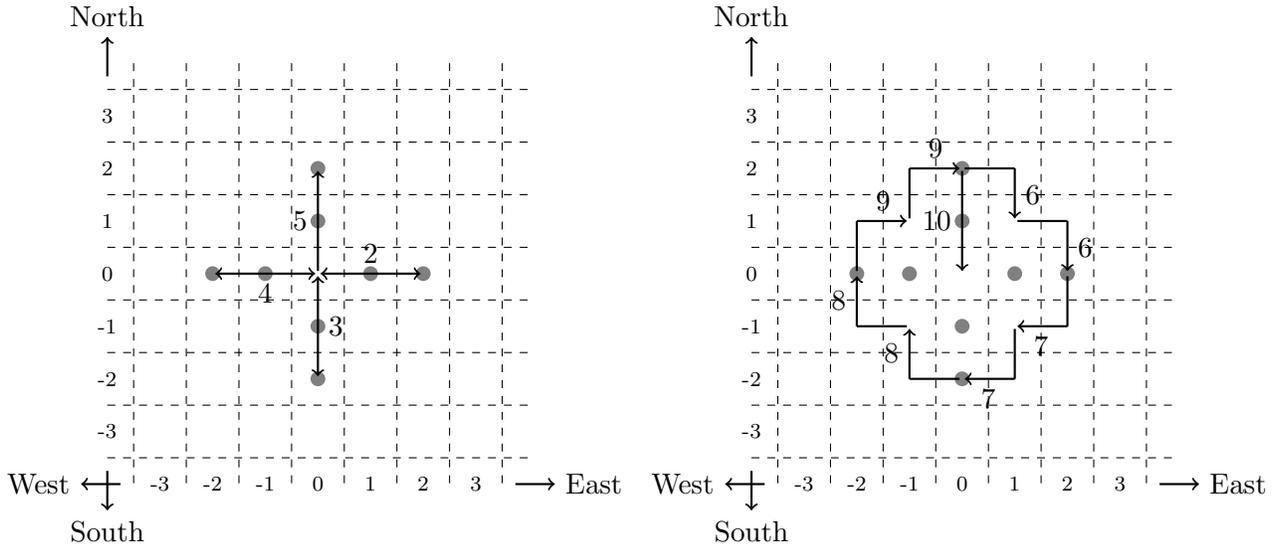

\begin{theorem}
For any $k \in \mathcal{N}$ asynchronous FSM ants and a food
 source located at an arbitrary distance $D \in \mathcal{N}$ from the nest, 
algorithm \ref{alg:async_fsm} successfully terminates in $\mathcal{O}(D + D^2/k)$
rounds, using $\mathcal{O}(D)$ pheromones.
\end{theorem}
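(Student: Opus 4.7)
The plan is to bound pheromone usage and runtime separately, with correctness essentially following from the algorithm's explicit layer-by-layer coverage.

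For the pheromone bound, I observe that in Algorithm \ref{alg:async_fsm} the only actions that emit a pheromone are the four ray extensions during each iteration of an ant's routine (the northern extension coinciding with the cell where the zig-zag begins). Every layer that is fully explored therefore contributes exactly four pheromones, and since the algorithm terminates no later than when layer $D$ has been explored, the total is $4D = O(D)$. Correctness will follow from tracing the zig-zag path — verifying that it visits every cell on its layer — combined with the observation that ants never idle (each ant returning to the nest immediately resumes ray extension), so the north ray grows without bound until the treasure is found.

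For the runtime, I plan a work-based accounting. One iteration of the routine for the current frontier layer $\ell$ costs a single ant $O(\ell)$ steps on the four ray round-trips and the zig-zag through the layer. Summed over $\ell = 1, \ldots, D$, the colony must perform $\Theta(D^2)$ steps of useful work before layer $D$ is fully explored. Since a round ends only once every ant has taken at least one step, in $T$ rounds the colony collectively executes at least $kT$ steps, which yields $T = O(D^2/k)$ once this matches the total-work bound. The additive $O(D)$ term captures the unavoidable latency of any single ant traversing from the nest out to the farthest explored layer along the established rays.

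The main obstacle is verifying that the colony's steps are productive — that the $\Theta(D^2)$ required work is not swamped by wasted or duplicated effort due to asynchronous races. The key observation is that although several ants can converge on the same ray tip, each ant that reaches a tip deposits a fresh pheromone rather than stalling, so every round trip directly extends a ray and thereby advances the frontier. I would formalize this via a potential function tracking the total length of the four rays plus the progress through the current zig-zag, and argue that this potential grows by $\Omega(k)$ per round after an initial $O(D)$ warm-up phase during which the first ants reach the frontier, which yields the claimed $O(D + D^2/k)$ bound.
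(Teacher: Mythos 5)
Your overall strategy (work accounting plus a productivity argument) is in the same spirit as the paper's proof, but two of your key steps do not go through as stated. First, the runtime accounting has the inequality pointing the wrong way: from ``a round forces every ant to take a step'' you get $kT \le (\text{total steps})$, which only bounds $T$ if you can \emph{upper}-bound the total number of steps by the $\Theta(D^2)$ of useful work --- and you cannot, because the adversary may starve the single ant that has claimed layer $D$ while the other $k-1$ ants keep returning to the nest and getting assigned layers $D+1, D+2, \dots$, doing unbounded work outside your budget. Your proposed repair, a potential that ``grows by $\Omega(k)$ per round,'' is also false as stated: an ant walking back west to the nest, or walking out along a ray it is not yet extending, changes neither any ray length nor any zig-zag progress, so a round can contribute far less than $k$ to that potential. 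The argument that works (and is the one the paper uses, tersely) is per-ant and amortized: each ant takes at least one step per round; each iteration on a layer $\ell \le D$ costs $O(D)$ of that ant's steps; the atomic sense-and-emit at the northern ray tip assigns each layer to exactly one ant; hence every $O(D)$ rounds each ant completes a fresh layer, so $D$ layers are finished within $O(D + D\cdot D/k)$ rounds, and once layer $D$ is claimed the claiming ant alone finds the treasure within $O(D)$ further rounds no matter what the others do.

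Second, your pheromone count ``exactly $4D = O(D)$ because the algorithm terminates once layer $D$ is explored'' is not true under adversarial scheduling, for the same starvation reason: while the ant about to step onto the treasure is withheld, the others keep extending all four rays and emitting without bound. The paper acknowledges exactly this in Remark \ref{remark:pheromones} and retreats to the weaker (and somewhat informal) claim that $O(D)$ pheromones are the minimum \emph{required} --- i.e., the algorithm would still succeed if its pheromone supply were capped at $O(D)$ --- deferring a clean resolution to the fault-tolerant variants. You should either adopt that framing explicitly or your $4D$ claim fails. A smaller omission: your correctness sketch should also record the invariant that the eastern, southern and western rays are always at least as long as the northern ray (because each iteration extends them before extending the northern one); without it the zig-zag on a newly opened layer might miss its turning markers and never return to the northern ray.
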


\begin{proof}
The northern ray is never longer than the eastern, southern or western rays. Therefore,
the required pheromone guides will always exist in every explored layer.

Each layer is explored only by a single ant due to the definition of the asynchronous
 model: only one ant senses the lack of a pheromone and emits one
 on the first grid cell of each layer (which is part of the northern ray). 

Moreover, under the asynchronous model, the number of steps taken by the slowest ant
 is an upper bound on the total number of
 rounds. Therefore, we may assume that all ants move at the same pace (once per round),
which leads directly to the fact that $k$ ants explore $k$ layers in $\mathcal{O}(D)$ rounds,
or $D$ layers in $\mathcal{O}(D + D^2 / k)$ rounds.

Each layer requires a single pheromone on every ray, which is $\mathcal{O}(1)$
pheromones per layer. See remark \ref{remark:pheromones} below, which establishes that
a total of $\mathcal{O}(D)$ pheromones is sufficient.
\end{proof}

\begin{algorithm}
  \caption{Asynchronous FSM; distributed treasure search.}
  \label{alg:async_fsm}
  \begin{algorithmic}[1]
      \While{$true$}
        \State go(east), emit(), go(west) \Comment{go(dir) repeats step while sensing a pheromone.}
        \State go(south), emit(), go(north) \Comment{emit() emits a pheromone.}
        \State go(west), emit(), go(east)
        \State go(north), emit() \Comment{Four rays extended, ready to explore.}
        \State explore(east, south) \Comment{explore(zig, zag) alternates steps until a pheromone is sensed.}
        \State explore(south, west)
        \State explore(west, north)
        \State explore(north, east) \Comment{New layer explored.}
        \State go(south)\Comment{Back to nest.}
      \EndWhile
  \end{algorithmic}
\end{algorithm}

\begin{remark}\label{remark:pheromones}
There exists a schedule for algorithm \ref{alg:async_fsm},
in which the ants emit a potentially unbounded number
of pheromones (e.g., if the ant that is about to find the treasure is never scheduled to make 
that crucial step). However, we only consider the minimum \emph{required}
amount of pheromones; i.e., if the ants were not able to emit more than $\mathcal{O}(D)$
pheromones, the given algorithm would eventually still be able to find the treasure
at any location $D \in \mathcal{N}$.

At any rate, the fault-tolerant algorithms presented in section \ref{section:ft} overcome this 
issue altogether.
\end{remark}

\subsection{A synchronous FSM Ant algorithm}\label{sec:sync_fsm}

The asynchronous algorithm \ref{alg:async_fsm} under-performs in the synchronous case,
due to the new risk of collisions: new ants may leave the nest for the first time exactly at the
 wrong moment,
 so that they collide with ants passing through the nest 
after completing an exploration phase. From that point onwards, collided ants will have
no way to distinguish one from the other, and will perform the same work.
In turn, this analysis concludes a worst case running-time of $\mathcal{O}(D^2)$ rounds.

We propose a synchronous algorithm that is similar to the asynchronous one, except for its ability to 
solve the aforementioned issue. This is based on the observation, that if two ants do not collide when
 extending the eastern ray, then they never collide; the distance between any two such ants can only
 increase as the algorithm progresses. Therefore, the ants are divided to two logical groups:
newbie ants and veteran ants. An ant is considered a veteran as soon as it starts exploring its first layer,
and is a newbie until that point. Both newbie and veteran ants operate
 exactly as any ant in the asynchronous case, except for a subtle difference in the
way the eastern ray is extended.

\begin{algorithm}
  \caption{Synchronous FSM; eastern ray extension. Rest follows algorithm \ref{alg:async_fsm}.}
  \label{alg:sync_fsm}
  \begin{algorithmic}[1]
  \While{$extended$ == $false$}
    \If{$newbie$ == $true$} \Comment{Newbie ants.}
      \State step(east)
      \IIf{sense() == $false$} emit() \EndIIf \Comment{sense() checks for pheromones.}
      \State step(north) \Comment{Go north anyway to keep gaps.}
      \State idle() \Comment{Wait an extra round for veterans.}
      \IIf{sense() == $false$} emit(), $extended$ $\leftarrow$ $true$ \EndIIf
      \State step(south) \Comment{Back to ray, and to nest if extended.}
    \Else \Comment{Veteran ants.}
      \State step(east)
      \If{sense() == $false$} 
        \State emit(), $extended$ $\leftarrow$ $true$
        \State step(north), emit()
        \State step(south)
      \EndIf
    \EndIf
  \EndWhile
  \end{algorithmic}
\end{algorithm}

When veteran ants attempt to extend the eastern ray, they emit two
pheromones: a pheromone on the grid cell $(i, 0) \in \mathcal{Z}^2$, which extends the eastern ray
 (as before), and another one on the grid cell $(i, 1)$ directly to the north. 
This mechanism allows newbie ants, during their similar attempt to extend the eastern ray, to wait a
 single round for that extra northern pheromone to appear, which will indicate a possible collision
 with a veteran ant;
if a collision is detected, the newbie ant will proceed to the next pheromone-free
grid cell on the eastern ray, to try again. Note that newbie ants should perform this check
of the northern grid cells in \emph{every step} on the eastern ray --
this is in order to keep newbie ants from colliding with other newbies, by maintaining the gaps introduced by the gradual
emission process of the ants.
 If there's no collision, newbie ants will emit the second pheromone exactly as would
veterans, so that the first part of the exploration phase would always look for two consecutive pheromones
to alter its zig-zag exploration pattern.
A formal description of the key modification in the extension of the eastern ray is provided in algorithm \ref{alg:sync_fsm}.

\begin{theorem}
\label{thm:sync_fsm}
For any $k \in \mathcal{N}$ synchronous FSM ants and a food
 source located at an arbitrary distance $D \in \mathcal{N}$ from the nest, 
 algorithm \ref{alg:sync_fsm} successfully terminates in $\mathcal{O}(D + D^2/k)$
rounds, using $\mathcal{O}(D)$ pheromones.
\end{theorem}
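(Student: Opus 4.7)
The plan is to reduce the analysis to the asynchronous case by establishing a \emph{separation invariant}: after any pair of ants have both completed their eastern-ray extension phase, they never again occupy the same grid cell in the same round. Once this invariant is in hand, the algorithm behaves (with respect to layer exploration) exactly like Algorithm \ref{alg:async_fsm}, and the time and pheromone bounds follow from the arguments already used in the asynchronous theorem.

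The hard part will be proving the separation invariant, since the interesting cases are (i) a newbie catching up to a veteran, and (ii) two newbies entering the eastern ray at nearly the same time. For case (i), I would argue that when a veteran extends the eastern ray it deposits two pheromones, one at $(i,0)$ and one at $(i,1)$, and it does so in two consecutive rounds. A trailing newbie, upon reaching the first $(i,0)$ that it finds pheromone-free, steps north and then idles an extra round before sensing; by that time, if a veteran is currently extending the same cell, the northern marker at $(i,1)$ is guaranteed to have been planted, so the newbie refrains from emitting and moves on to try $(i+1,0)$, thereby shifting at least one cell past the veteran. For case (ii), I would argue by induction on the sequence in which newbies leave the nest: the gradual emission scheme ensures an initial gap of at least one round, and the mandatory "step north / idle / step south" dance performed on every eastern-ray cell preserves this gap, because both newbies execute the same number of rounds per cell. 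Thus two newbies can never collapse onto the same cell during extension.

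Given separation at the end of the eastern-ray phase, I would then observe that from that point on each ant follows the same deterministic routine as in Algorithm \ref{alg:async_fsm}: it extends the southern and western rays, walks to the tip of the northern ray, and zig-zags through a single fresh layer before returning south to the nest. Since any two ants that are separated on the eastern ray remain on distinct cells throughout subsequent ray extensions (the southern and western rays are already longer, so the two ants simply retrace existing pheromones together but shifted in time by the offset they already have), and each layer-exploration phase is triggered by exactly one ant finding the unique pheromone-free cell at the northern-ray tip, each layer is still explored by precisely one ant. The remaining nest re-entries for the next iteration are harmless, because an ant re-entering the nest becomes, in effect, a veteran again and the same analysis reapplies with the roles preserved.

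Finally, for the complexity bounds I would reuse the asynchronous accounting verbatim: a single layer at distance $\ell$ costs $\Theta(\ell)$ rounds to extend the rays and traverse, and adds only $\mathcal{O}(1)$ new pheromones (four on the rays plus the additional north-ray marker on the eastern ray). The extra idle round per eastern-ray cell costs each newbie a constant-factor slowdown for only $\mathcal{O}(D)$ rounds in total (until it becomes a veteran), so the total time for $k$ ants to collaboratively explore $D$ layers is $\mathcal{O}(D + D^2/k)$, and the total pheromone count is $\mathcal{O}(D)$, matching the lower bound of Theorem \ref{thm:fsm}.
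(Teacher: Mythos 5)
Your proposal follows essentially the same route as the paper's proof: rule out collisions by a case analysis on newbie/newbie, newbie/veteran, and (implicitly) veteran/veteran pairs --- using the preserved emission gaps, the second northern pheromone, and the ever-growing gaps between veterans, respectively --- and then inherit the time and pheromone accounting from the asynchronous analysis. One caveat: your separation invariant as literally stated (no two ants ever occupy the same grid cell in the same round) is stronger than needed and in fact fails when two ants pass each other travelling in opposite directions along a ray; the property the paper actually argues --- and the one your argument really needs --- is that no two ants ever reach identical configurations and hence never explore the same layer, which is unaffected by momentary co-location since ants cannot sense one another on a shared cell.
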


\begin{proof}
Layer exploration phase is correct for the same reasons as before; the northern ray is never
longer than any other ray.

The most important new property that has to be proven, is that there are no collisions, i.e.,
every ant explores distinct layers. This is the case because:
(1) the gaps between newbie ants are kept, so newbies do not collide with other newbies, 
(2) newbie ants do not collide with veteran ants due to the new signaling mechanism, 
which introduces gaps between them, and
(3) veteran ants cannot collide with other veteran ants because the gaps between veterans 
can only grow: if ant $a$ starts exploring layer $l_a$ before ant $b$ starts layer $l_b$, then 
$l_a < l_b$, and $a$ will finish exploring $l_a$ before $b$ would finish $l_b$ (at which
point ant $a$ would head back to the nest, and remain ahead of ant $b$). This induction
 holds initially since every veteran starts exploring its first layer as a newbie.
Therefore, no two ants perform the same work, and the algorithm is both complete 
and correct.
\end{proof}

\section{The case of Fault-Tolerant Ants}
\label{section:ft}

This section extends the analysis of mobile deterministic FSM ants, by considering the case
where $f < k$ ants may fail-stop at any point during the search.
We propose synchronous and asynchronous fault-tolerant algorithms that consume
 $\mathcal{O}(D)$ pheromones as before, but still guarantee successful termination.
The presented algorithms execute in $\mathcal{O}(D + D^2 / (k-f) + D f)$
 rounds, which presents an additional price of $\mathcal{O}(Df)$ rounds to handle failures.

\section{An asynchronous Fault-Tolerant FSM Ant algorithm}

When ants may fail-stop, the asynchronous algorithm \ref{alg:async_fsm} may
fail to find the food source; e.g., the ant that starts exploring layer $D$, which contains
the food, may die just before finding it. In this case, no other ant will ever explore layer $D$
again. 

To recap, the aforementioned algorithm proceeds in phases, for each ant: first, there is the ray extension
phase, that extends the four rays, and then there is the exploration phase, that explores
a certain layer and relies on the pheromones emitted during the first phase. These
two phases are repeated until the food is found.
To present a fault-tolerant algorithm, we modify the
 exploration phase and introduce a verification phase right after it. 
These two modifications are described below.

After finishing an exploration phase, an ant emits an extra pheromone on the newly explored
layer, at the grid cell directly to the east of the northern ray, 
i.e., as the final step in exploring layer $l \in \mathcal{N}$, a pheromone is emitted on grid cell $(1, l)$.
 This pheromone indicates that the layer has been fully explored. Once this pheromone is
 emitted and the ant is back at the northern ray, it initiates the verification phase;
it travels along the northern ray back to the nest, and verifies that every layer on that
path is fully explored -- re-exploring every incomplete layer.
A formal description is provided in algorithm \ref{alg:async_ft}.

\begin{algorithm}
  \caption{Asynchronous Fault-Tolerant FSM; verification phase. Rest follows algorithm \ref{alg:async_fsm}.}
  \label{alg:async_ft}
  \begin{algorithmic}[1]
      \While{sense() == $true$} \Comment{Still on the northern ray, not at the nest.}
        \State step(east)
        \If{sense() == $false$} \Comment{Found unfinished layer.}
          \State explore\_layer\_fsm()  \Comment{Explore (using pheromones) and get back to northern ray.}
          \State step(east) \Comment{Mark layer as completed,}
          \State emit() \Comment{by emitting a pheromone.}
        \EndIf
        \State step(west) \Comment{Back to northern ray.}
        \State step(south)
      \EndWhile
  \end{algorithmic}
\end{algorithm}

\begin{theorem}
\label{thm:async_ft}
For any $k \in \mathcal{N}$ asynchronous FSM ants, any
$f < k$ ants fail-stopping during the algorithm, and a food
 source located at an arbitrary distance $D \in \mathcal{N}$ from the nest, 
 algorithm \ref{alg:async_ft} successfully terminates in $\mathcal{O}(D + D^2/(k-f) + Df)$
rounds, using $\mathcal{O}(D)$ pheromones.
\end{theorem}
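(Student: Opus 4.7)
My plan is to establish three claims: correctness (the treasure is eventually located despite up to $f$ failures), the pheromone count is $\mathcal{O}(D)$, and the round complexity is $\mathcal{O}(D + D^2/(k-f) + Df)$. I would reuse the analysis of Algorithm~\ref{alg:async_fsm} for the failure-free baseline, and focus on the two new features: the completion marker at $(1, l)$ emitted after exploring layer $l$, and the verification phase that re-explores any layer whose completion marker is missing.

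For correctness I would call a layer $l$ \emph{incomplete} if its northern ray has been extended through $(0, l)$ but no pheromone has yet been emitted at $(1, l)$. The key invariant is that the number of incomplete layers is always bounded by the number of ants that have died so far inside the exploration or re-exploration routine; in particular at most $f$ layers are ever incomplete. The justification is that each incomplete layer requires at least one failed exploration or re-exploration attempt on it, each such attempt is chargeable to a distinct ant death, and once $(1,l)$ is emitted the layer is permanently complete (pheromones never disappear). Since $k - f \geq 1$ ants survive and each continues iterating the main loop, some surviving ant must eventually extend the northern ray past any incomplete layer $l$, and during its subsequent verification sweep south it steps east at layer $l$, senses the missing marker, and invokes the re-exploration routine. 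In particular, if the treasure sits in an incomplete layer $D$, the ensuing re-exploration will visit every cell of $D$ and locate the treasure.

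For the pheromone count and running time, the baseline uses $\mathcal{O}(1)$ markers per layer and the fault-tolerant variant adds exactly one completion marker per layer at $(1,l)$, for $\mathcal{O}(D)$ in total; concurrent or failed re-explorations emit no additional markers since pheromones are permanent. For the rounds, the failure-free cost follows the argument for Algorithm~\ref{alg:async_fsm}: each surviving ant's cycle on layer $l$ (ray extension, four-way exploration, and the $\Theta(l)$ verification sweep) is $\Theta(l)$, so $k - f$ surviving ants cover the first $D$ layers in $\mathcal{O}(D + D^2/(k-f))$ rounds. On top of this, the invariant gives at most $f$ incomplete layers and each failed re-exploration consumes one of the $f$ permitted deaths, so the cumulative number of re-exploration attempts across all ants is $\mathcal{O}(f)$, each costing $\mathcal{O}(D)$ steps, for $\mathcal{O}(Df)$ total overhead work. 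In the asynchronous model the round count is dominated by the slowest ant's step count, so even concentrating all overhead on a single ant yields the claimed $\mathcal{O}(D + D^2/(k-f) + Df)$ bound.

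The main obstacle I anticipate is pinning down the invariant rigorously in the presence of concurrent re-explorations: two surviving ants may independently be verifying different layers and both reach the same unfinished layer before either emits its marker, duplicating up to $\mathcal{O}(D)$ work. I will need to argue that such duplication neither inflates the pheromone count (the first ant to reach $(1, l)$ is the unique emitter, and subsequent arrivers sense the newly-placed permanent pheromone) nor blows up the round bound (the per-ant cost of redundantly re-exploring any single layer is still only $\mathcal{O}(D)$, which is absorbed in the per-ant $\mathcal{O}(Df)$ budget). Once these concurrent-scheduling nuances are handled carefully, the three claims combine to give the theorem.
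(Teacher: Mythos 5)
Your overall strategy matches the paper's: establish the failure-free bound of $\mathcal{O}(D + D^2/(k-f))$ by reusing the analysis of algorithm \ref{alg:async_fsm}, then charge $\mathcal{O}(D)$ additional rounds to each of the $f$ deaths. However, two of your intermediate claims are false as literally stated, though you essentially repair both in your closing paragraph. First, the invariant that ``the number of incomplete layers is bounded by the number of deaths'' fails, because every layer currently being explored by a live ant is incomplete under your definition; you want to count \emph{abandoned} layers (incomplete layers on which every ant that began exploring them has died), of which there are indeed at most $f$. Second, and more importantly, the claim that ``the cumulative number of re-exploration attempts across all ants is $\mathcal{O}(f)$'' is wrong: a single death can trigger up to $k-f$ redundant re-explorations of the same abandoned layer, since many verifying ants may pass $(1,l)$ and sense the missing marker before any of them finishes and emits it, so the global count can be $\Theta(kf)$. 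The per-ant budget you fall back on at the end is the correct repair: each ant re-explores each abandoned layer at most once (after its own re-exploration it has emitted the marker itself), giving $\mathcal{O}(Df)$ overhead per surviving ant, which bounds the round count because a round requires a step from every live ant; this is, in disguise, the same charge the paper makes when it argues that within $\mathcal{O}(D)$ rounds all ants reach the abandoned layer and within $\mathcal{O}(D)$ more it is completed. One further point you gloss over: the assertion of ``one completion marker per layer, $\mathcal{O}(D)$ total'' needs the observation that the verification discipline prevents any ant from finishing an iteration on a layer beyond $D$ before the treasure is found, so only $\mathcal{O}(D+k)=\mathcal{O}(D)$ layers are ever reached; this is exactly the point of the paper's first sentence and what resolves remark \ref{remark:pheromones}, and without it the baseline pheromone count is unbounded under adversarial scheduling.
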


\begin{proof}
An ant proceeds to explore a new layer after completing layer $l_{done}$, only
if it has verified that all layers $l < l_{done}$ are fully explored. 
This proves that no more than $\mathcal{O}(D + k) = \mathcal{O}(D)$ pheromones will 
be emitted\footnote{Thereby resolving remark \ref{remark:pheromones}.}, and that every
 layer $l \le D$ will eventually be explored.

To analyze the running-time, let us begin by assuming that \textbf{no failures} are possible. 
Under this assumption, consider the case where an ant $a$ that has just finished exploring layer $l_{done}$ also
re-explores layers $l_0, l_1, ..., l_n$ during its verification phase, whereas $l_n$ is the
last layer ant $a$ explores before reaching the nest. When ant $a$ started exploring
 $l_{done}$, there was already another ant $b$ at layer $l_n$ (otherwise ant $a$ wouldn't proceed).
 Since $l_n < l_{done} \le D$,
it then follows that ant $b$ has done less than $\mathcal{O}(D)$ steps by the time
that ant $a$ has reached layer $l_n$ (otherwise layer $l_n$ would have already been fully
explored), and therefore no more than $\mathcal{O}(D)$ 
rounds could have passed. Hence, to conclude: without fail-stops, every $\mathcal{O}(D)$ rounds,
 every ant explores a new layer. Overall, it takes no more than $\mathcal{O}(D + D^2/k)$ 
rounds for the $k$ ants to find the treasure in this case.

The previous analysis holds for every set of layers which has been successfully explored
without any ant failing in it. When an ant that explores layer $l \le D$ fails,
then we have, according to the
 previous analysis: after at most $\mathcal{O}(D)$ rounds all the ants will get to layer $l$,
 and after another $\mathcal{O}(D)$ rounds the layer will be fully explored. 
So, in the worst case, each fail-stop event is associated with $\mathcal{O}(D)$ extra rounds, which
 amounts to at most $\mathcal{O}(Df)$ additional rounds.
\end{proof}

\section{A synchronous Fault-Tolerant FSM Ant algorithm}

To handle the added risk of fail-stops, the synchronous algorithm \ref{alg:sync_fsm} has to
be extended in the same manner as presented in the asynchronous fault-tolerant 
algorithm
 \ref{alg:async_ft}. That is, the synchronous algorithm (presented in section \ref{sec:sync_fsm}), which uses newbie
 and veteran
ants in order to prevent collisions, and progresses in alternating phases of ray extension and
exploration, should now be updated to mark explored layers with an extra pheromone, and
has to conduct a verification phase after each exploration. 
A full description is provided in algorithm \ref{alg:sync_ft}.

\begin{algorithm}
  \caption{Synchronous Fault-Tolerant FSM; combination of algorithms \ref{alg:sync_fsm} and \ref{alg:async_ft}.}
  \label{alg:sync_ft}
  \begin{algorithmic}[1]
  \State $newbie \leftarrow true$ \Comment{Synchronous ants start as newbies.}
      \While{$true$} 
  \State $extended \leftarrow false$ \Comment{Algorithm \ref{alg:sync_fsm}; eastern ray extension.}
  \While{$extended$ == $false$}
    \If{$newbie$ == $true$} \Comment{Newbie ants.}
      \State step(east)
      \IIf{sense() == $false$} emit() \EndIIf \Comment{sense() and emit() pheromones.}
      \State step(north) \Comment{Go north anyway to keep gaps.}
      \State idle() \Comment{Wait an extra round for veterans.}
      \IIf{sense() == $false$} emit(), $extended$ $\leftarrow$ $true$ \EndIIf
      \State step(south) \Comment{Back to ray, and to nest if extended.}
    \Else \Comment{Veteran ants.}
      \State step(east)
      \If{sense() == $false$} 
        \State emit(), $extended$ $\leftarrow$ $true$
        \State step(north), emit()
        \State step(south)
      \EndIf
    \EndIf
  \EndWhile
        \State go(west) \Comment{go(dir) repeats step while sensing a pheromone.}
        \State go(south), emit(), go(north)
        \State go(west), emit(), go(east)
        \State go(north), emit() \Comment{Four rays extended, ready to explore. Now back at the northern ray.}
        \State explore\_layer\_fsm()  \Comment{Explore (zigzags with pheromones), get back to northern ray.}
        \State step(east), emit(), step(west) \Comment{Mark new layer as explored.}
        \State step(south) \Comment{Still on the northern ray.}
      \While{sense() == $true$} \Comment{Algorithm \ref{alg:async_ft}; verification phase.}
        \State step(east)
        \If{sense() == $false$} \Comment{Found unfinished layer.}
          \State explore\_layer\_fsm() 
          \State step(east) \Comment{Mark layer as completed,}
          \State emit() \Comment{by emitting a pheromone.}
        \EndIf
        \State step(west) \Comment{Back to northern ray.}
        \State step(south)
      \EndWhile
      \EndWhile
  \end{algorithmic}
\end{algorithm}

\begin{theorem}
For any $k \in \mathcal{N}$ synchronous FSM ants, any
$f < k$ ants fail-stopping during the algorithm, and a food
 source located at an arbitrary distance $D \in \mathcal{N}$ from the nest, 
algorithm \ref{alg:sync_ft} successfully terminates in $\mathcal{O}(D + D^2/(k-f) + Df)$ rounds, 
using $\mathcal{O}(D)$ pheromones.
\end{theorem}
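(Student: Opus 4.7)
The plan is to combine the two proofs we have already developed. I would first inherit the correctness and pheromone count directly from the analysis of Theorem \ref{thm:async_ft}: the verification phase guarantees that an ant only advances to a fresh layer once every closer layer has been marked complete by a pheromone at $(1,l)$, so every layer $l \le D$ is eventually explored despite up to $f$ fail-stops. Only $O(1)$ pheromones are emitted per layer (one per ray, the synchronous signaling pheromone at $(i,1)$, and the completion marker at $(1,l)$), yielding $O(D)$ pheromones overall and simultaneously resolving remark \ref{remark:pheromones}.

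Next, I would reproduce the no-collision property from Theorem \ref{thm:sync_fsm} in the presence of the added verification phase. The base case is unchanged: newbies are spaced apart by the gradual release from the nest and by checking for the extra northern signaling pheromone at every step on the eastern ray, so newbies never collide with other newbies or with veterans. The inductive step requires showing that the monotone ordering between any two veterans $a,b$ -- namely that $a$, which started exploring before $b$, stays strictly ahead of $b$ -- is preserved across the combined exploration-plus-verification cycle. Since the per-layer workload (ray extension, one exploration, and a verification sweep back to the nest) grows monotonically with the layer index, the ant exploring a lower-numbered layer returns to the nest strictly before the ant exploring a higher-numbered one. A fail-stop of any other ant can only add work to both $a$'s and $b$'s verification sweeps, and since $b$ must traverse every layer that $a$ traverses, the ordering is preserved and no two ants ever duplicate work.

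Finally, the runtime analysis mirrors the proof of Theorem \ref{thm:async_ft}: without fail-stops, each of the $k-f$ surviving ants completes one new layer every $O(D)$ rounds (the verification sweep adds at most $O(D)$ overhead per cycle), so the first $D$ layers are covered in $O(D + D^2/(k-f))$ rounds. Each fail-stop leaves at most one layer incomplete, which the next ant visiting it re-explores in $O(D)$ additional rounds, yielding the extra $O(Df)$ term. The main obstacle will be the veteran-ordering argument in the second paragraph: one must carefully rule out the scenario where a leading ant is ``overtaken'' during its lengthy verification sweep by a trailing ant that happens to find all markers already in place. I expect this can be handled by observing that the completion marker at $(1,l_{done})$ is emitted by the leading ant strictly before the trailing ant can reach the corresponding region, which ultimately relies on the at-least-one-round separation guaranteed by the newbie/veteran signaling mechanism of algorithm \ref{alg:sync_fsm}.
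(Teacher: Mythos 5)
Your overall decomposition matches the paper's: inherit coverage of every layer $l \le D$ and the $\mathcal{O}(D)$ pheromone bound from Theorem \ref{thm:async_ft}, re-establish collision-freedom along the lines of Theorem \ref{thm:sync_fsm}, and charge $\mathcal{O}(D)$ extra rounds per failure. The gap is exactly where you flag it. You try to preserve the global invariant that a veteran $a$ which started exploring before $b$ returns to the nest strictly before $b$, on the grounds that the per-cycle workload is monotone in the layer index. Once the verification phase is added this is no longer sound: $a$ may be forced to re-explore an incomplete layer $m$ during its sweep (costing $\Theta(m)$ rounds) while $b$, arriving at layer $m$ only after $a$ has already marked $(1,m)$, skips it in $\mathcal{O}(1)$ rounds; the gap between them can therefore shrink and $b$ can catch up to $a$. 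Your proposed repair --- that the completion marker at $(1,l)$ is always emitted strictly before the trailing ant can reach that cell --- is precisely the claim that can fail, and the paper does not assert it.

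The paper instead abandons the ordering invariant and argues locally that no \emph{merge} (same cell, same round, same state) can occur at the only possible convergence point, the marker cell $(1,l)$. When $b$, sweeping down the northern ray, probes $(1,l)$: either the marker is already present, so $b$ is strictly behind and continues south; or it is absent, so $b$ enters layer $l$ itself while the ant already working there keeps a fixed positive round offset ahead of it (both perform the same deterministic walk at the same speed); or --- the delicate case --- $b$ reaches $(1,l)$ in the very round in which $a$ emits the marker. In that round $a$ is at the end of its exploration phase (about to emit and return to the northern ray) while $b$ is in its verification phase having just sensed no pheromone, so the two ants are in different states and diverge rather than merge. You should replace your ordering argument with this case analysis (or prove only the weaker invariant forbidding coincidence of cell, round and state); the remainder of your proof --- the $\mathcal{O}(D)$ pheromone count, the failure-free $\mathcal{O}(D+D^2/(k-f))$ bound, and the $\mathcal{O}(Df)$ surcharge --- agrees with the paper.
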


\begin{proof}
As shown in the proof for theorem \ref{thm:async_ft}, every layer $l \le D$ is eventually
explored, and at most $\mathcal{O}(D)$ pheromones are emitted during the search.

The main new risk introduced by the synchronous model is that of collisions between ants,
i.e., ants performing the same work at the same rounds.
The proposed algorithm, in fact, does not allow such collisions to occur;
newbie and veteran ants cannot collide during the ray extension or exploration phases
 for the same reasons noted in the proof for theorem \ref{thm:sync_fsm}.
For a collision to occur during the verification phase, ant $a$ has to finish exploring a layer
 $l$, and then converge with another ant $b$ on its way to the nest along the northern ray. 
But ant $b$ checks every layer to make sure that it is explored - so it either senses the
pheromone at $(1, l)$, in which case ant $b$ is strictly after ant $a$, or it does not, in which
case ant $b$ also explores layer $l$, but ant $a$ is already done with it. 

The key point is to observe the round in which ant $a$ marks $(1, l)$ with a 
pheromone: if ant $b$ arrives to that cell strictly after or strictly before that round, then 
there is no collision. Otherwise both ants $a$ and $b$ arrive to $(1,l)$ at the same round
-- but that happens when $a$ is at the end of its exploration phase (it is about to emit
a pheromone and get back to the northern ray), and $b$ senses the lack of a pheromone
and heads to explore layer $l$.
Hence there is no collision.

Since there are no collisions, every ant explores new layers as long as there are no failures.
Every failure "wastes" at most $\mathcal{O}(D)$ rounds, which amounts to no
more than $\mathcal{O}(Df)$ extra rounds overall.
\end{proof}

\section{The case of TM Ants}
\label{section:tm}

In an attempt to further reduce the number of required pheromones, a stronger 
computational model is considered. In this section, every ant has the computational
capabilities of a Turing Machine, but actually requires no more than $\mathcal{O}(\log D)$ memory bits.
We assume that there are no faults nor failures.

\section{Lower Bound}

A single TM ant can find the treasure with no pheromones in $\mathcal{O}(D^2)$ time, 
by performing a spiral shaped search.
However, we argue that pheromones must be used for the search to be cooperative, and
complete in optimal time.

\begin{theorem}
Performing a distributed search for a treasure at an unknown distance 
$D \in \mathcal{N}$ by $k$ TM ants in $O(D + D^2/k)$ rounds under the \textbf{synchronous}
 model, requires $\Omega(k)$ pheromones. 
\end{theorem}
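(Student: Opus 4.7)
The plan is to derive a coverage contradiction: I would show that if only $p = o(k)$ pheromones are emitted, then the $k$ ants cannot collectively visit enough distinct cells in $T = O(D + D^2/k)$ rounds to correctly handle every adversarial treasure placement. The first step establishes the coverage requirement: because the algorithm is uniform and deterministic with worst-case time $T(D) = O(D + D^2/k)$ on any distance-$D$ treasure, the set of cells visited by the ants by time $T(D)$ must contain every cell at distance $\le D$; varying $D$, this forces the cells visited by time $T$ to number $\Omega(D^2)$.

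Next I would analyze the ants' trajectories. In the synchronous model with gradual release (whose duration is negligible relative to $T$), all $k$ ants execute the same deterministic TM code, so the only way for two ants' behaviors to diverge is for them to sense different pheromone states at some matching local time on their paths. In particular, an ant that never encounters a pheromone emitted by another ant must follow the unique pheromone-free single-ant trajectory $\tau^{\ast}$ (time-shifted by its release moment), whose length is at most $T$. Grouping ants by the full (cell,\,state) sequence of their execution, if the ants collectively realize $m$ distinct trajectories, then the union of cells visited is at most $mT$; combined with the coverage lower bound this gives $m = \Omega(D^2/T) = \Omega(k)$, using the standing assumption $D \ge k$.

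It remains to bound $m$ in terms of $p$. I would set up a charging argument on the tree of distinct trajectories rooted at $\tau^{\ast}$: at each internal branching, some ants within an existing equivalence class diverge from the rest because they sense the pheromone at the branching cell $c$ while the others do not, which can only happen when the single wall-clock emission moment of that pheromone falls strictly between the wall-clock arrival times at $c$ of the two resulting subclasses. Each branching is then charged to the specific pheromone at $c$. The main obstacle is showing that each pheromone is charged only $O(1)$ times — in principle, one pheromone at $c$ could straddle the arrival windows at $c$ of several pre-existing trajectories and split each of them. The resolution should exploit that under gradual release all ants on the same trajectory arrive at any fixed cell within a wall-clock window of only $O(k)$ rounds, while a pheromone has a single emission moment, so it can realize only a bounded number of simultaneous splits. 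Establishing $m = O(p+1)$ together with $m = \Omega(k)$ then gives $p = \Omega(k)$, contradicting $p = o(k)$ and completing the lower bound.
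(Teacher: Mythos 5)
Your overall strategy --- show that $T=O(D+D^2/k)$ rounds force $\Omega(D^2)$ cells to be covered, hence $m=\Omega(D^2/T)=\Omega(k)$ distinct trajectory classes, and then bound the number of classes by the number of pheromones --- is a legitimate shape for this lower bound, and its first two steps are sound. But the third step, $m=O(p+1)$, is precisely the hard part, and the charging argument you sketch does not establish it. A single pheromone at cell $c$ has one emission moment $t^{\ast}$, yet arbitrarily many distinct trajectory classes can pass through $c$ with wall-clock arrival windows that all straddle $t^{\ast}$: each class's window spans essentially the whole release interval shifted by that class's local arrival time at $c$, so the windows of many classes (and of repeated visits by the same class) overlap, and the one pheromone splits every one of them. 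Your proposed resolution --- that a single emission moment ``can realize only a bounded number of simultaneous splits'' --- is asserted rather than proved, and I do not see why it would hold. Even the weaker accounting in which each pheromone splits each existing class at most once only gives $m\le 2^{p}$, which is useless here: for $p=o(k)$ one can still have $2^{p}\gg k$ (e.g.\ $p=\sqrt{k}$), so no contradiction with $m=\Omega(k)$ follows. As it stands, this is a genuine gap, not a detail to be routinely filled in.

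The paper sidesteps the difficulty entirely by exploiting the adversary's control over the emission (release) schedule instead of reasoning about an arbitrary one. It releases the ants one at a time, each only after every previously released ant has permanently stopped emitting pheromones; since the total budget is $o(k)$, some ant $p=o(k)$ emits nothing, the pheromone configuration is static from that round on, and every subsequently released ant sees exactly the same world as ant $p$ and therefore behaves identically to it forever. This collapses all but $p$ ants into a single trajectory class by construction, with no charging argument needed; taking $k'\gg k\ge p$ ants and $D\gg k'$ then yields running time $\Theta(D+D^2/p)$, which is not $O(D+D^2/k')$. To repair your proof you would either have to prove a per-pheromone $O(1)$-split bound for arbitrary schedules (which looks false) or adopt the paper's move of constructing the release schedule adversarially so that divergence can only happen at release time.
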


\begin{proof}
Assume by contradiction that $T$ is a TM ant which is able to find a treasure at any
distance $D$ in $O(D+D^2/k)$ rounds with any $k$ ants, while only using $o(k)$
 pheromones.
We will devise a specific ant emission scheme, dependent on $T$, which does not achieve the
assumed running-time -- thus reaching a contradiction. 

Let us analyze the behavior of a group of $k$ ants implemented by TM $T$. We start by
 emitting only one ant at round $t_0 + 1 =1$. That ant must stop emitting pheromones at 
some round (perhaps right away),
denoted as $t_1$. $T$ only emits $o(k)$ pheromones, therefore $t_1$ must exist.
 We emit the second ant at round $t_1 + 1$,
and again wait until round $t_2$ when both ants finish emitting all of their pheromones. 
We emit the third ant
at time $t_2 + 1$ and repeat the process until we get to emit the first ant which does not
emit any pheromones: ant $p$, emitted at time $t_{p-1}+1$. There is at least one such ant,
since the total number of pheromones is $o(k)$. 
Notice that two invariants hold once ant $p$ is emitted: 
(i) no extra pheromones will be emitted
by all ants $i \le p$ according to our assumption (their view of the world does not change),
and (ii)
all ants $j > p$ operate exactly as ant $p$, since they are deterministic and share a common
 view of the world. Also worth noting is that $t_i$, for all $i \le p$, are constants which only
 depend on the specification of TM $T$.

We now increase the number of ants and keep emitting them, a single ant per round,
 until a total of $k' \gg k \ge p$
 ants have been emitted. Ants know neither $k$ nor $k'$, therefore the aforementioned behavior does not
change. However, since all $k'-p$ (at least) ants operate in the same way, there exists some 
$D \gg k'$ such that it takes the ants 
$\Theta(D + D^2/p) > O(D + D^2 / k')$ rounds to find the treasure, in
contradiction to our assumption.
\end{proof}

\begin{theorem}
Performing a distributed search for a treasure at an unknown distance 
$D \in \mathcal{N}$ by $k$ TM ants in $O(D+D^2/k)$ rounds under the \textbf{asynchronous}
 model, requires $\Omega(k)$ pheromones. 
\end{theorem}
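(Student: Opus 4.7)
The plan is to mirror the synchronous lower bound by substituting the gradual-emission scheme with an adversarial scheduling sequence. First I would assume for contradiction that a TM algorithm $T$ finds the treasure at every distance $D \in \mathcal{N}$ in $O(D+D^2/k)$ rounds for every $k$, while emitting only $o(k)$ pheromones in total.

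I would then fix an adversarial schedule in which in every round the ants are stepped in the fixed order $1, 2, \ldots, k$ (each exactly once per round). To analyze this execution I would introduce the notion of a \emph{group}: a maximal set of ants that share both the same grid cell and the same internal state at a given moment. Initially the whole swarm forms a single group at the nest. The key invariant, playing the role of invariants (i)--(ii) of the synchronous proof, is that such a group can split only when its first-scheduled (\emph{leader}) member emits a pheromone at a previously pheromone-free cell; in every other case each non-leader member senses exactly what the leader did and transitions identically, so it ends the round in the same cell and state as the rest. Thus the total number of distinct groups ever formed is bounded by $1$ plus the total number of emitted pheromones, i.e., at most $1 + o(k) = o(k)$.

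Since every ant inside a group follows an identical trajectory, the set of grid cells visited by the whole colony by round $R$ has cardinality at most (number of groups) $\times R = o(k) \cdot R$. I would then invoke the standard deterministic ANTS coverage argument: by monotonicity of the worst-case running time $T(D)$ in the treasure's distance, in order to succeed at every distance $d \le D$ within $T(D)$ rounds the algorithm must have visited all $\Theta(D^2)$ cells in the ball of Manhattan radius $D$ by that time. Combining the two bounds yields $o(k) \cdot T(D) \ge \Theta(D^2)$, hence $T(D) = \omega(D^2/k)$; for $D$ sufficiently large this exceeds the assumed $O(D+D^2/k)$, reaching a contradiction.

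The hardest part will be making the group-splitting invariant precise: I must verify that emissions at already-marked cells and (redundant) emissions issued by non-leader members within a round never cause a split, and that in the absence of a fresh emission by the leader every member of the group performs the same transition and move. Once this bookkeeping is carefully done, the remainder essentially retraces the synchronous proof, with $p = o(k)$ playing the role of the index of the first non-emitting ant.
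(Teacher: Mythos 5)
Your route is genuinely different from the paper's, and the one point you flag as ``the hardest part'' is indeed where the gap sits --- but the danger is not where you are looking for it. You plan to verify that redundant emissions (at already-marked cells, or by non-leaders) never split a group. Those cases are easy. The case that actually threatens your invariant is an emission by a member of a \emph{different} group that happens to occupy the same cell: if group $G$ and group $G'$ sit on cell $c$ in the same round and some ant of $G'$ places a fresh pheromone on $c$ while it is scheduled strictly between two members of $G$, then $G$ splits into the members scheduled before the emission and those after --- two parts that can both be large --- and a single emission can split every group present at $c$ simultaneously. If that were possible, each pheromone could roughly double the number of groups, giving only a $2^{o(k)}$ bound rather than $1+o(k)$, which destroys the coverage count. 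The invariant is rescued only by an argument you have not supplied: under your fixed round-robin order, the groups are always \emph{intervals} of the index set $\{1,\dots,k\}$. One proves this inductively: among the (pairwise disjoint, hence non-interleaving) interval groups present at a cell, the first ant to emit there is necessarily the leader of one of them, so that group splits into its leader and the rest, and every other group at the cell steps entirely before or entirely after the emission and stays intact. With that lemma, each fresh pheromone creates exactly one new group, your bound of $o(k)$ trajectories follows, and the rest of your argument ($o(k)\cdot T(D)\ge\Theta(D^2)$, take $D\gg k$) is sound.

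For comparison, the paper avoids all of this bookkeeping: it takes an arbitrary schedule on which $k$ ants meet the tight $\Theta(D+D^2/k)$ bound with $o(k)$ pheromones, picks one ant $p$ that never emits, and inflates the colony to $k'\gg k$ by scheduling each new ant to copy $p$'s step immediately after $p$; since $p$ emits nothing, the new ants share $p$'s view, behave identically, and change nothing, so the running time stays $\Theta(D+D^2/k)$, which violates the required $O(D+D^2/k')$. That argument is a few lines and needs no structural invariant about the execution; yours, once the interval lemma is in place, has the compensating advantage of giving a quantitative statement (at most $1+P$ distinct trajectories after $P$ pheromones) that holds for the whole execution rather than hinging on the existence of a single silent ant.
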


\begin{proof}
Let us assume by contradiction that $T$ is a TM ant that finds a treasure at any
distance $D$ in $O(D+D^2/k)$ asynchronous rounds for any $k$ ants, while using
no more than $o(k)$ pheromones.

Let there be some ant group size $k$, distance $D$ and scheduling $S$, such that it takes
the $k$ ants $\Theta(D + D^2/k)$ rounds and less than $k$ pheromones to find the treasure. 
This is a tight lower bound on the running-time, therefore there must exist such a scenario. 
There is at least one ant which has not emitted any pheromones throughout the run;
let us denote one such ant as $p$.

Consider a different group of $k' \gg k$ ants, where some $k$ ants are scheduled 
exactly as in schedule $S$ and the other $k'-k$ ants are scheduled to take a single step each, 
one after
the other, right after each step of ant $p$. Ant $p$ emits no pheromones, so it shares its view of the
world with the $k'-k$ ants that follow it, and all these ants operate exactly the same (due to $T$
 being deterministic). 
The number of rounds remains $\Theta(D + D^2/k) > O(D + D^2/k')$, in contradiction to 
our assumptions.
\end{proof}

\section{Asynchronous and synchronous TM Ant algorithms}

The same algorithm works for both the asynchronous and synchronous models.
The main idea is to perform a \emph{renaming} of the ants using pheromones, followed
by a static partition of the search space, which is based on the names (IDs) of the ants.
We use pheromones to provide each ant with:
(1) a unique id $id_i \in [1, k]$, and
(2) an estimate $total_i \le k$ on the total number of participating ants.
Given these two numbers, each ant $i$ is able to compute all layers
 $l_i \equiv id_i \pmod{total_i}$, which it has to explore.

The full algorithm is as follows. Upon leaving the nest, each ant travels along the northern
ray until it encounters the first pheromone-free grid cell, $(0, i) \in \mathcal{Z}^2$. It emits
a pheromone at that grid cell and sets its $id_i = i$. At this point its estimate of the total 
number of participating ants is $total_i = i$. The ant then starts exploring layers 
$id_i, id_i + total_i, id_i + 2 total_i, ...$. After exploring every new layer, the ant returns to the 
northern ray to check for new pheromones; if it discovers that the first (closest to the nest)
 pheromone-free grid cell on the northern ray is $(0, j)$, it updates $total_i = j$ 
and restarts the search from layer $id_i + total_i$. A formal definition is provided in 
algorithm \ref{alg:tm}.

\begin{algorithm}
  \caption{Asynchronous and synchronous TM; distributed treasure search.}
  \label{alg:tm}
  \begin{algorithmic}[1]
      \State $id \leftarrow 1$, $total \leftarrow 1$, step(north)
      \While{sense() == $true$} \Comment{Searching for the first pheromone-free grid cell.}
        \State step(north)
        \State $id \leftarrow id + 1$, $total \leftarrow total + 1$
      \EndWhile
      \State emit(), $current\_layer \leftarrow id$
      \While{$true$} \Comment{Repeating exploration phase.}
        \State explore\_layer\_tm($current\_layer$) \Comment{Explore by counting steps.}
        \If{update\_total()} \Comment{If a new pheromone has appeared.}
          \State $current\_layer \leftarrow id$
        \EndIf
        \State $id \leftarrow id + total$ \Comment{Proceed to next layer.}
      \EndWhile
  \end{algorithmic}
\end{algorithm}

\begin{theorem}
For any $k \in \mathcal{N}$ asynchronous/synchronous TM ants and a food
 source located at an arbitrary distance $D \in \mathcal{N}$ from the nest, 
algorithm \ref{alg:tm} terminates in $\mathcal{O}(D + D^2/k)$
rounds, using $\mathcal{O}(k)$ pheromones.
\end{theorem}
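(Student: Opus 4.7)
The plan is to verify three properties of algorithm \ref{alg:tm}: that at most $\mathcal{O}(k)$ pheromones are ever emitted, that every layer at distance at most $D$ is eventually explored, and that the algorithm terminates in $\mathcal{O}(D + D^2/k)$ rounds under both scheduling models.

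For the pheromone count I would simply inspect the pseudocode. Each ant invokes \texttt{emit()} exactly once, the moment it lands on the first pheromone-free cell of the northern ray and stakes its identifier. The routines \texttt{explore\_layer\_tm} and \texttt{update\_total} only count steps (using the ant's Turing tape) and sense pheromones, so the total number of emissions is precisely $k$.

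For correctness, the key claim is that the distributed renaming stabilizes. Because ants leave the nest at distinct moments (by the gradual-release assumption in the synchronous case, and by the negligibility of emission time in the asynchronous case), the first ant to reach cell $(0,i)$ finds it empty, deposits a pheromone, and permanently adopts $id = i$; subsequent ants walk past this pheromone. Hence the $k$ identifiers are exactly $\{1,\ldots,k\}$. After $\mathcal{O}(k)$ rounds each ant has taken at least $k$ steps and so has completed its northward walk, so all $k$ pheromones are in place; from that point any invocation of \texttt{update\_total} returns the correct value $k$. Ant $i$ subsequently explores the arithmetic progression $\{i + jk : j \ge 0\}$, and the union over $i=1,\ldots,k$ covers every natural number, so every layer $l \le D$ is eventually visited and the treasure is found.

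For the running time I would bound the per-ant step count, since a round requires every ant to take at least one step and so the round count is upper bounded by the maximum per-ant step count. After stabilization, ant $i$ explores layers at distances $i, i+k, i+2k, \ldots$ up to $D$; each such exploration costs $\Theta(\ell)$ steps for a layer at distance $\ell$ (walk along the northern ray to the layer, zig-zag around it, and walk back), yielding
\[
\sum_{j=0}^{\lceil D/k\rceil} \Theta(i + jk) \;=\; \mathcal{O}\bigl(D + D^2/k\bigr)
\]
post-stabilization steps. The pre-stabilization transient lasts at most $\mathcal{O}(k)$ rounds; during the transient an ant under a worst-case adversarial schedule that grants it only one step per round performs at most $\mathcal{O}(k) \le \mathcal{O}(D)$ additional steps, which is absorbed into the final bound by the standing assumption $D > k$. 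The main obstacle I anticipate is precisely this transient analysis in the asynchronous model, because a devious scheduler may grant one ant arbitrarily many steps per round and thereby cause it to explore many layers with a too-small stride before any update; the resolution is that the round count is governed by whichever ant makes the least progress, and such an ant obtains $\mathit{total} = k$ on its very first return to the ray after the $\mathcal{O}(k)$-round pheromone-placement phase, after which its remaining work is exactly the steady-state $\mathcal{O}(D + D^2/k)$ quantity.
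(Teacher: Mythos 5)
Your decomposition (count emissions, show the renaming stabilizes with identifiers $\{1,\dots,k\}$, bound rounds by per-ant step counts) is the same as the paper's, and your treatment of the pheromone bound and of the synchronous case is if anything more careful than the paper's one-line versions. The place where you part company with the paper is exactly the place you flag yourself: the asynchronous transient. The paper dismisses it with the assertion that one ``may assume that all ants move at the same pace (once per round),'' i.e., that extra steps granted to an ant within a round can only reduce the round count; you correctly recognize that this needs an argument for this algorithm, because extra early steps can send an ant far from the nest with a too-small stride.

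However, your proposed resolution does not close that hole. The inequality you are using is $R \le \min_i(\text{steps}_i)$, so to bound $R$ you must exhibit \emph{some} ant that provably finds the treasure within $\mathcal{O}(D+D^2/k)$ of \emph{its own} steps under every schedule; bounding the post-stabilization work of ``whichever ant makes the least progress'' does not do this, because that ant's residue class need not contain $D$, and an ant finishing its own assigned layers up to distance $D$ does not imply the treasure has been found. The natural witness is the ant $c$ with $id_c \equiv D \pmod{k}$, and that is precisely the ant the adversary can attack: grant $c$ a very large number of steps inside round one, after only some of the other ants have emitted on the northern ray, so that $c$ adopts a stride $t<k$ and an offset whose arithmetic progression misses layer $D$; $c$ then marches out to an arbitrarily distant layer $M$ before the remaining pheromones appear, and (one can arrange the other ants' intermediate totals so that none of their transient progressions hits $D$ either) no ant visits layer $D$ until $c$ has spent $\Omega(M)$ further steps returning, which under a lockstep continuation costs $\Omega(M)$ rounds with $M$ unbounded. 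So the transient cost is not ``absorbed'' into $\mathcal{O}(k)$ extra steps as you claim; it is genuinely unbounded in the ant's own step count unless one either adds a fairness assumption, argues that granting extra steps cannot increase the round count (the paper's unproven claim), or modifies the update rule so an ant cannot commit to exploring a layer far beyond its current estimate of $k$ without re-checking the ray. To be fair to you, the paper's own proof rests on the same unjustified reduction, but since you explicitly identified the obstacle, you should be held to actually resolving it, and the resolution as written does not go through.
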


\begin{proof}
Each ant only emits a single pheromone, which amounts to $\mathcal{O}(k)$ pheromones
overall.

For the asynchronous case, we have previously established the assumption that each ant makes
a single step in every round (otherwise, the total number of rounds may only decrease).
Therefore, each ant explores at least one new layer every $\mathcal{O}(D)$ steps, and
the search terminates in $\mathcal{O}(D + D^2/k)$ rounds.

In the synchronous case, there are no collisions because ants never leave the nest at the
 same round. Once all the ants leave the nest, the algorithm terminates in 
$\mathcal{O}(D + D^2/k)$ rounds, as it takes at most $\mathcal{O}(D)$ rounds for every
ant to update the new total (due to the assumption that $k < D$), 
and from that point each ant explores its own set of layers 
according to the static partition defined by the algorithm.
\end{proof}

\section{Summary}
\label{section:summary}

We have presented different approaches to solving the ANTS problem with
 pheromones. We have analyzed the amount of pheromones necessary and sufficient under
 different scheduling models and various computational capabilities,
and in the presence of fail-stop faults.
Table \ref{tbl:summary} summarizes our results.

\begin{table}[h]
	\centering
	\begin{tabular}{|l|c|c|c|}
		\hline
		& FSM & Fault-Tolerant FSM & TM \\ \hline
		Pheromones - Lower bound & $\Omega(D)$ & $\Omega(D)$ & $\Omega(k)$ \\ \hline
		Pheromones - Upper bound & $\mathcal{O}(D)$ & $\mathcal{O}(D)$ & $\mathcal{O}(k)$ \\ \hline
		Rounds - Lower bound & $\Omega(D + D^2/k)$ & $\Omega(D+D^2 / k)$ & $\Omega(D+D^2 / k)$ \\ \hline
		Rounds - Upper bound & $\mathcal{O}(D + D^2/k)$ & $\mathcal{O}(D + D^2/(k-f) + Df)$ & $\mathcal{O}(D + D^2/k)$ \\ \hline
	\end{tabular}
	\caption{Research summary; results hold for both asynchronous and synchronous models, whereas $D \in \mathcal{N}$ is the distance to the food, $k \in \mathcal{N}$ is the total number of participating ants, and $f < k$ is the number of fail-stops.}
	\label{tbl:summary}
\end{table}

We have presented Fault-tolerant algorithms that are resilient
to fail-stops, but still only require the computational capabilities of FSM, and no more
than the minimum amount of pheromones. Whether or not the performance or pheromone
utilization of these algorithms can be improved in the TM model, remains an open question.

Moreover, to the best of our knowledge, no lower bounds on the running-time for the 
fault-tolerant case have ever been 
presented -- the lower bound of $\Omega(D+D^2/k)$ rounds still holds, while our algorithm
terminates in $\mathcal{O}(D + D^2/(k-f) + Df)$ rounds, which leaves a gap.

Another major part of a real ant foraging process is what happens once the food (treasure) is actually
found; commonly, the discovering ant has to find its way back to the nest and inform other
 ants of the discovery in order to get assistance in carrying the food back.
This task can make use of pheromones, and as such it poses a similar problem of devising
efficient algorithms, both in terms of running-time and pheromone utilization.

Other topics for related future research could be: coping with obstacles, considering 
a similar cooperative
search originating from multiple nests, or utilizing different flavors of pheromones 
(possibly with limited lifespan).

\clearpage

\bibliographystyle{abbrv}
\bibliography{ants}

\end{document}